\documentclass[conference]{IEEEtran}
\IEEEoverridecommandlockouts
\usepackage{cite}
\usepackage{amsmath,amssymb,amsfonts}
\usepackage{algorithmic}
\usepackage{bm}
\usepackage{graphicx}
\usepackage{dsfont}
\usepackage{enumitem}
\usepackage{pifont}
\usepackage{siunitx}
\usepackage{textcomp}
\usepackage{xcolor}
\usepackage{amsthm}
\usepackage{mathrsfs}  
\def\BibTeX{{\rm B\kern-.05em{\sc i\kern-.025em b}\kern-.08em
    T\kern-.1667em\lower.7ex\hbox{E}\kern-.125emX}}

\definecolor{emerald}{rgb}{0,0.65,0.42}

\definecolor{blue}{rgb}{0,0,0.8}

\definecolor{blue}{rgb}{0,0,0.8}

\allowdisplaybreaks[3]
\makeatletter
\newcommand{\vast}{\bBigg@{5.5}}
\newcommand{\Vast}{\bBigg@{6}}

\setlength{\arraycolsep}{3pt}

\newtheorem{proposition}{Proposition}

\begin{document}

\title{Distributed Frequency Regulation for Heterogeneous Microgrids via Steady State Optimal Control\\
\thanks{This work was funded by the Deutsche Forschungsgemeinschaft (DFG, German Research Foundation)---project number 360464149.	
}
}

\author{\IEEEauthorblockN{Lukas K\"olsch, Manuel Dupuis, Kirtan Bhatt, Stefan Krebs, and S\"oren Hohmann}
	\IEEEauthorblockA{\textit{Institute of Control Systems, Karlsruhe Institute of Technology (KIT)}, Karlsruhe, Germany \\
		lukas.koelsch@kit.edu, manuel.dupuis@student.kit.edu, kirtan.bhatt@student.kit.edu, \\ stefan.krebs@kit.edu, soeren.hohmann@kit.edu}
}

\maketitle

\begin{abstract}
In this paper, we present a model-based frequency controller for microgrids with nonzero line resistances 
based on a port-Hamiltonian formulation of the microgrid model and real-time dynamic pricing. 
The controller is applicable for conventional generation with synchronous machines as well as for power electronics interfaced sources and it is robust against power fluctuations from uncontrollable loads or volatile regenerative sources.
The price-based formulation allows additional requirements such as active power sharing to be met.
The capability and effectiveness of our procedure is demonstrated by means of an 18-node exemplary grid.
\end{abstract}

\begin{IEEEkeywords}
frequency regulation, steady state optimal control, microgrid, port-Hamiltonian systems, distributed control
\end{IEEEkeywords}

\section{Introduction}
\subsection{State of Research}
The energy transition motivates a worldwide trend towards renewable energy generation which should substitute the conventional power plants in the future. 
A key aspect of renewable sources is their distributed and volatile nature compared to the centralized and well predictable character of conventional power plants \cite{Kariniotakis_2017,Bevrani.2013b}. 
This also results in a necessary change in the control schemes of the power network. 
So far, frequency control, i.e. the regulation of the imbalance between power generation and demand, has been the task of the transmission system operator using a hierarchy of primary, secondary and tertiary frequency control layers: 
In the first layer, frequency deviations and thus power imbalances are prevented from further increasing, in the second layer, the nominal state is restored, and in the third layer, an economic optimization is carried out. Both secondary and tertiary layer are each governed by a central controller.

However, the distributed nature of renewable energy generation encourages the application of a distributed frequency control scheme between multiple agents which are able to handle the control task in parallel 
\cite{Patnaik.2018}.
For this reason, steady state optimal control by real-time dynamic pricing poses an advantageous control concept especially for large scale networks, since it enables communication of network imbalances via a price signal, see \cite{Doerfler2019} for a survey on current research directions regarding frequency regulation. This kind of controller features a distributed architecture for frequency restoration based on neighbor-to-neighbor communication and local measurements which is able to reach a desired economic optimum at steady state and thus provides a unifying approach incorporating all three control layers \cite{Koelsch2019}. \\
A common assumption made in previous publications on dynamic pricing methods for frequency regulation, e.g. \cite{Stegink.}, is that all power lines are lossless, which is in fact an incorrect assumption especially for medium and low voltage grids. Practically, if controllers like the one proposed in \cite{Stegink.} are used, a synchronous frequency $\overline{\omega}$ is achieved which deviates from the nominal frequency $\omega^{n}$ \cite{Koelsch2019}.

There are publications like \cite{Bevrani.2013b} which take nonzero line resistances into account by proposing a generalized droop control method. However, these methods rely on a fixed $R/X$ ratio for the designed generalized droop control concept, which makes it necessary to compensate for this simplifying assumption by an additional control layer.      

\subsection{Main Contributions}
A \emph{gradual} transition from today's power network with large, centralized power plants towards a future network with a large number of small, decentralized generation units is essential. 
For this purpose, we present a model-based, distributed, steady state optimal frequency controller that is applicable for heterogeneous and lossy microgrids with both types of network connectors, i.e. synchronous generators for conventional power plants as well as inverters for renewable sources. 
The controller design is based on \cite{Stegink.2017,Stegink.} and our previous work \cite{Koelsch2019} by  integrating an inverter model based on \cite{Monshizadeh.} in the underlying microgrid model. Both system and controller are represented as a port-Hamiltonian system, which results in a closed-loop system that is again port-Hamiltonian.
Due to the port-Hamiltonian structure, stability of the closed-loop system can finally be characterized using a (shifted) passivity property.

The remainder of this paper is structured as follows. In section \ref{section:model}, we derive a port-Hamiltonian model of a heterogeneous microgrid consisting of a mixture of conventional generation with synchronous machines, renewable generation via power electronics interfaced sources, and uncontrollable consumers or producers. In section \ref{section:controller}, we formulate a price-based, distributed frequency controller which is robust against power demand fluctuations.
In section \ref{section:simulation}, we demonstrate the performance of the controller under heavy load changes by means of an 18-node test network and in section \ref{section:conclusion}, we summarize our results and provide an outlook on future research directions.

\section{Microgrid Model}\label{section:model}
Microgrids consist of conventional and regenerative generators as well as consumers that are all physically connected together via a lossy and meshed electrical network. 

Accordingly, the microgrid is modeled as a directed graph consisting of three different types of nodes:
\begin{enumerate}
\item \emph{Synchronous generator nodes} which are connected to synchronous generators of conventional power plants.
\item \emph{Inverter nodes} which are connected to power electronics interfaced sources.
\item \emph{Load nodes} which are characterized by a given and uncontrollable active and reactive power demand. 
\end{enumerate}

To set up a dynamic model of the microgrid in port-Hamiltonian formulation in the course of this section, at first all notational conventions as well as symbols used are outlined (section \ref{ch:Notational}) and the model assumptions and simplifications used are listed (section \ref{ch:Assumptions}), before submodels for synchronous generators (section \ref{ch:g-Model}), inverters (section \ref{ch:i-model}) and load nodes (section \ref{ch:l-model}) are derived. Interconnection with lossy lines (section \ref{ch:line-model}) finally results in an overall model in port-Hamiltonian representation (section \ref{ch:overall-model}), which forms the basis (``plant model'') for the controller design.

\subsection{Notational Preliminaries}\label{ch:Notational}
Vector $\bm a = \mathrm{col}_i\{a_i\}=\mathrm{col}\{a_1,a_2,\ldots\}$ is a column vector of elements $a_i$, $i=1,2,\ldots$ and matrix $\bm A=\mathrm{diag}_i\{a_i\}=\mathrm{diag}\{a_1,a_2,\ldots\}$ is a (block-)diagonal matrix of elements $a_i$, $i =1,2,\ldots$. The $(n \times n)$-identity matrix and $(n \times n)$-zero matrix are denoted by $\bm I_n$ and $\bm 0_n$, respectively. Steady state (i.e. equilibrium) variables are marked with an overline.

The microgrid is modeled by a directed graph $\mathscr G_p=(\mathcal V, \mathcal E_p)$ with $\mathcal V = \mathcal V_\mathcal G \cup \mathcal V_I \cup \mathcal V_\ell$ being the set of $n_\mathcal G=|\mathcal V_\mathcal G|$ generator nodes , $n_\mathcal I =|\mathcal V_I|$ inverter nodes, and $n_\ell=|\mathcal V_\ell|$ load nodes.

The physical interconnection of the nodes is represented by the incidence matrix $\bm D_p \in \mathds R^{n\times m_p}$ with $n=n_\mathcal G+n_\mathcal I+n_\ell$ and $m_p = |\mathcal E_p|$. Incidence matrix $\bm D_p$ can be subdivided as follows
\begin{align}
\bm D_p = \begin{bmatrix} \bm D_{p\mathcal G} \\ \bm D_{p\mathcal I} \\ \bm D_{p \ell} \end{bmatrix},
\end{align}
where submatrices $\bm D_{p \mathcal G}$, $\bm D_{p\mathcal I}$ and $\bm D_{p \ell}$ correspond to the generator, inverter, and load nodes, respectively.
We note $j \in \mathcal N_i$ if node $j$ is a \emph{neighbor} of node $i$, i.e. $j$ is adjacent to $i$ in the undirected graph.

Positive semidefiniteness of a matrix is denoted by $\succeq 0$, whereas nonnegativity of a scalar is denoted by $\geq 0$.

Subscript $p$ denotes the \emph{plant} variables, i.e. the variables of the microgrid model, whereas subscript $c$ denotes the variables of the \emph{controller}.

A list of symbols used for parameters and state variables of the microgrid model is given in Table \ref{table:PlantSymbols}.
\begin{table}[!t]
	\renewcommand{\arraystretch}{1.3}
	\caption{List of Microgrid Parameters and State Variables}
	\begin{center}
		\begin{tabular}{|l||l|}
			\hline
			$A_i$ & positive generator, inverter and load damping constant \\
			$B_{ii}$ & negative of self-susceptance \\
			$B_{ij}$ & negative of susceptance of line $(i,j)$ \\
			$C_{DC}$ & capacitance in DC circuit of the inverter \\
			$\bm D_p$ & incidence matrix of microgrid\\
			$G_{DC}$ & conductance in DC circuit of the inverter \\
			$G_{ij}$ & negative of conductance of line $(i,j)$\\
			$i_{DC}$ & current source in DC circuit of inverter \\
			$\bm i_{\alpha \beta}$ & output current of inverter \\
			$L_i$ & deviation of angular momentum from nominal value $M_i\omega^{n}$ \\
			$M_i$ & moment of inertia \\ 
			$p_i$ & sending-end active power flow \\
			$p_{g,i}$ & active power generation \\
			$p_{\ell,i}$ & active power demand \\
			$q_i$ & sending-end reactive power flow \\
			$q_{\ell,i}$ & reactive power demand \\
			$T_{e}$ & electrical torque at generator \\
			$T_{m}$ & mechanical torque at generator \\
			$u_{DC}$ & input voltage of the inverter \\
			$U_i$ & magnitude of transient internal voltage \\
			$U_{f,i}$ & magnitude of excitation voltage \\
			$X_{d,i}$ & d-axis synchronous reactance \\
			$X'_{d,i}$ & d-axis transient reactance \\
			$\theta_i$ & bus voltage phase angle \\
			$\vartheta_{ij}$ & bus voltage angle difference $\theta_i-\theta_j$\\
			$\Phi$ & overall transmission losses \\
			$\tau_{U,i}$ & open-circuit transient time constant of synchronous machine \\
			$\omega_i$ & deviation of bus frequency from nominal value $\omega^{n}$ \\
			$\omega_{\mathcal I}$ & virtual frequency of inverter \\
			\hline
		\end{tabular}
		\label{table:PlantSymbols}
	\end{center}
\end{table}
\subsection{Modeling Assumptions}\label{ch:Assumptions}
In accordance with \cite{Stegink.2017,Stegink.,Monshizadeh.}, we make the following modeling assumptions for the microgrid model and the controller:
\begin{enumerate}[leftmargin=2.5em,label=(A\arabic*)]
	\item The grid is operating around the nominal frequency $\omega^{n}=2\pi \cdot \SI{50}{\hertz}$. \label{A1}
	\item The grid is a balanced three-phased system and the lines are represented by its one-phase $\pi$-equivalent circuits. \label{A2}
	\item Subtransient dynamics of the synchronous generators is neglected. \label{A3}
	\item The matching controller of the inverters presented in section \ref{ch:i-model} has fast dynamics compared to the price-based frequency controller. \label{A4}
\end{enumerate}
However, we make the following less restrictive assumptions:
\begin{enumerate}[leftmargin=2.5em,label=(A\arabic*),resume]
	\item Power lines are lossy, i.e. have nonzero resistances. \label{A5}
	\item Loads do not have to be constant. \label{A6}
	\item Excitation voltages of the generators do not have to be constant. \label{A7}
\end{enumerate}

\subsection{Dynamic Model of Generator Nodes}\label{ch:g-Model}

For generator node $i \in \mathcal V_\mathcal G$, the third-order generator model ("flux-decay" model), described in local dq-frame, appropriately represents the transient dynamic behavior \cite{Machowski.2012}:
\begin{IEEEeqnarray}{rCl} 
\dot{\theta}_i &=& \omega_{i}, \label{G1}
\\
M_i\dot{\omega}_i &=& -A_i\omega_i + p_{g,i} - p_{\ell,i} - p_{i}, \label{G2}
\\
\tau_{U,i}\dot{U}_i &=& U_{f,i} - U_i + \left(X_{d,i}-X_{d,i}^{'}\right) I_{d,i}. \label{G3}
\end{IEEEeqnarray}
According to \cite{Boldea.2006}, the stator d-axis current $I_{d,i}$ can be formulated as
\begin{equation}
I_{d,i} = \frac{U_{j}\cos\left(\theta_{i}-\theta_{j}\right)-U_{i}}{X_{d,i}} \label{G4}
\end{equation}
with transient internal voltage $U_i$ and terminal voltage $U_j$. In power system literature, it is a common assumption that the stator resistance can be neglected and thus lossless reactive power flow
\begin{equation}
q_i = \frac{U_i^{2}}{X_d} - \frac{U_iU_j}{X_d}\cos\left(\theta_i-\theta_j\right) \label{G5}
\end{equation}
can be used to describe the generator dynamics. With \eqref{G5}, the identity
\begin{IEEEeqnarray}{rCl}
\frac{q_i}{U_i} &=& - \frac{U_i}{X_{d,i}}  + \frac{U_j}{X_{d,i}}\cos\left(\theta_i-\theta_j\right) \nonumber 
\\
&=& I_{d,i} \label{G7}	
\end{IEEEeqnarray}
holds. Substituting \eqref{G3} with \eqref{G7}, this yields the generator model as in \cite{Stegink.2017}:
\begin{align} 
	\dot{\theta}_i &= \omega_{i}, && i \in \mathcal V_\mathcal G, \label{G8}
	\\
	M_i\dot{\omega}_i &= -A_i\omega_i + p_{g,i} - p_{\ell,i} - p_{i}, && i \in \mathcal V_\mathcal G, \label{G9}
	\\
	\tau_{U,i}\dot{U}_i &= U_{f,i} - U_i + \left(X_{d,i}-X_{d,i}^{'}\right)U_{i}^{-1}q_i, && i \in \mathcal V_\mathcal G. \label{G10}
\end{align}
Without loss of generality, we assume that the generated power $p_{g,i}$ is controllable, while the power demand $p_{\ell,i}$ is uncontrollable. Thus $p_{g,i}$ serves as control input and $p_{\ell,i}$ is a disturbance input. 

\subsection{Dynamic Model of Inverter Nodes}\label{ch:i-model}
The inverters are regulated by $v_{DC} \sim  \omega$ matching control \cite{Monshizadeh.} to mimic the dynamic behavior of a synchronous generator. 
This is done by exploiting the structural similarities between kinetic energy of the rotor of a synchronous generator and electric energy stored in the DC-side capacitor of an inverter.
The relevant equation describing the used 3-phase DC/AC inverter dynamics in $\alpha \beta$-frame is 
\begin{equation}
C_{DC}\dot{u}_{DC} = - G_{DC}u_{DC} + i_{DC} - \frac{1}{2}\bm i_{\alpha \beta }^{\top}\bm m_{\alpha \beta} , \label{I1}
\end{equation}
with $\bm m_{\alpha \beta}$ being an AC power electronics modulation signal generated by the controller to match the behavior of the synchronous generator presented above \cite{Jouini.2016}. For this purpose, the internal model, also in $ \alpha \beta $-frame, equals \cite{Jouini.2016}
\begin{IEEEeqnarray}{rCl}
\dot{\theta} &=& \omega, \label{I2}
\\
M\dot{\omega} &=& -A \omega + T_m - T_e, \label{I3}
\end{IEEEeqnarray}
where the expression for the electrical torque $T_e$ can be expressed by 
\begin{equation}
T_e = - L_{m}i_{f}\bm i_{\alpha \beta}^{\top}\left[ \begin{array}{c} -\sin\left(\theta\right) \\ \cos\left(\theta\right) \end{array} \right] , \label{I4}
\end{equation}  
with the stator-to-rotor mutual inductance $L_m$ and the excitation current $i_f$.
Hence, the internal generator model to be matched by the inverter is
\begin{IEEEeqnarray}{rCL}
\dot{\theta} &=& \omega, \label{I5}
\\
M\dot{\omega} &=& -A \omega + T_m + L_{m}i_{f}\bm i_{\alpha \beta}^{\top}\left[ \begin{array}{c} -\sin\left(\theta\right) \\ \cos\left(\theta\right) \end{array} \right]. \label{I6}
\end{IEEEeqnarray} 
For this purpose, the modulation signal is chosen to  
\begin{equation}
\bm m_{\alpha \beta} = \mu \cdot  \left[ \begin{array}{c} -\sin\left(\theta_{\mathcal I}\right) \\ \cos\left(\theta_{\mathcal I}\right) \end{array} \right] , \label{I7} 
\end{equation}
with $\theta_{\mathcal I}$ as a ``virtual'' rotor angle and a constant gain $\mu >0$. The ``virtual'' frequency thus results in $\omega_{\mathcal I} = \dot{\theta}_{\mathcal I}$. 

Furthermore, a linking between rotational speed, or frequency respectively, and power consumption needs to be created by construction. This is achieved by appyling a proportional controller for $\omega_{\mathcal I}$ based on local measurement of $u_{DC}$:
\begin{equation}
\omega_{\mathcal I} = \eta u_{DC} , \label{I8} 
\end{equation}
again with a constant gain $\eta >0$. \\
By setting $\mu$ to 
\begin{equation}
\mu = - 2 \eta L_m i_f \label{I9}
\end{equation}
and using \eqref{I7}, the last term in \eqref{I1} can be reformulated to
\begin{IEEEeqnarray}{rCL}
\frac{1}{2}\bm i_{\alpha \beta}^{\top}\bm m_{\alpha \beta} &=& - \eta  L_m i_f \bm i_{\alpha \beta}^{\top}\left[ \begin{array}{c} -\sin\left(\theta_{\mathcal I}\right) \\ \cos\left(\theta_{\mathcal I}\right) \end{array} \right] \nonumber
\\
&=& \eta T_{e,\mathcal I} , \label{I11}
\end{IEEEeqnarray}
with the "virtual" electrical torque $T_{e,\mathcal I}$. \\
By inserting \eqref{I11} in \eqref{I1}, substituting $u_{DC}$ w.r.t \eqref{I8} and dividing by $\eta$, the inverter model can be formulated as
\begin{IEEEeqnarray}{rCL}
\dot{\theta}_{\mathcal I} &=& \omega_{\mathcal I}, \label{I12}
\\
\frac{C_{DC}}{\eta^{2}}\dot{\omega}_{\mathcal I} &=& - \frac{G_{DC}}{\eta^{2}}\omega_{\mathcal I} + \frac{i_{DC}}{\eta} - T_{e,\mathcal I} . \label{I13}
\end{IEEEeqnarray}
To further highlight the resemblance between the modulated inverter equations and the synchronous generator model, as in \cite{Jouini.2016}, the coefficients in \eqref{I13} can be interpreted as \emph{virtual inertia} $M^{*}_\mathcal I = \frac{C_{DC}}{\eta^{2}}$ and \emph{virtual damping} $A^{*}_\mathcal I = \frac{G_{DC}}{\eta^{2}}$. Moreover, the DC current $i_{DC}$ is chosen according to \cite{Monshizadeh.} as
\begin{align}
i_{DC} = \eta A^{*}_\mathcal I \omega^{n} + \eta \cdot \frac{p_g}{\omega_{\mathcal I}}. \label{I13b}
\end{align}
Furthermore, the \emph{virtual electrical torque} can be described as $T_{e,\mathcal I} = \frac{p_{\mathcal I}}{\omega_{\mathcal I}}$, with $p_{\mathcal I}$ being the power input of the inverter. Under the assumption that no power is dissipated in the inverter $i \in \mathcal V_\mathcal I$, i.e. $p_{\mathcal I,i}=p_{\ell,i}+p_i$, and use of \eqref{I8} and \eqref{I13b}, this allows a reformulation of \eqref{I13} as
\begin{align}
	M^{*}_{\mathcal I,i} \dot{\omega}_{\mathcal I,i} = - A^{*}_{\mathcal I,i}\left(\omega_{\mathcal I,i} - \omega^{n} \right) + \frac{1}{\omega_{\mathcal I}} \left(p_{g,i} - p_{\ell,i} - p_i\right) . \label{I13c}
\end{align}
If \ref{A1} holds, multiplying \eqref{I13c} with $\omega^{n}$ yields 
\begin{align}
	\dot{\theta}_{\mathcal I,i} &= \omega_{\mathcal I,i}, && i \in \mathcal V_\mathcal I, \label{I14}
	\\
	M_{\mathcal I,i}\dot{\omega}_{\mathcal I,i} &= - A_{\mathcal I,i}\omega_{\mathcal I,i} + p_{g,i} - p_{\ell,i} , && i \in \mathcal V_\mathcal I \label{I15}
\end{align}
with $M_{\mathcal I} = M^{*}_{\mathcal I} \omega^{n}$, $A_{\mathcal I} = A^{*}_{\mathcal I} \omega^{n}$, and $\omega_{\mathcal I}$ expressing the deviation of frequency from its nominal value $\omega^{n}$. In particular, the structural similarities between  \eqref{I15} and the swing equation \eqref{G9} can be clearly seen now. 
\subsection{Dynamic Model of Load Nodes}\label{ch:l-model}
The loads are modeled by an active power consumption which consists of both a frequency-dependent part with load damping coefficients $A_{i}\geq 0$ and a frequency-independent part $p_\ell$, as well as frequency-independent reactive consumption $q_\ell$ \cite{Boldea.2006}:
\begin{align}
\dot \theta_i &= \omega_i, && i \in \mathcal V_\ell, \label{load1}\\
0 &= -A_i \omega_i -p_{\ell,i}-p_i, && i \in \mathcal V_\ell,  \label{load2} \\
0 &= -q_{\ell,i}-q_i, && i \in \mathcal V_\ell.  \label{load3}
\end{align}
\subsection{Transmission Lines}\label{ch:line-model}
Generator, inverter, and load nodes are interconnected via transmission lines, which are modeled by the lossy AC power flow equations \cite{Machowski.2012}
\begin{align}
p_i &= \sum_{j \in \mathcal N_i} B_{ij}U_iU_j \sin(\vartheta_{ij}) + G_{ii}U_i^2 && \nonumber\\
&+ \sum_{j \in \mathcal N_i}G_{ij}U_iU_j\cos(\vartheta_{ij}),&& i \in \mathcal V, \label{powerflow1}\\
q_i &= -\sum_{j \in \mathcal N_i} B_{ij}U_iU_j \cos(\vartheta_{ij}) + B_{ii}U_i^2 && \nonumber\\
&+ \sum_{j \in \mathcal N_i}G_{ij}U_iU_j\sin(\vartheta_{ij}),&& i \in \mathcal V \label{powerflow2}
\end{align}
with $\bm Y = \bm G + \mathrm j \bm B$ being the admittance matrix and $\vartheta_{ij}=\theta_i - \theta_j$ being the voltage angle deviation between two adjacent nodes. 
Note that by definition of the admittance matrix, $G_{ij}<0$ and $B_{ij}>0$ if nodes $i$ and $j$ are connected via a resistive-inductive line \cite{Machowski.2012}.

\subsection{Overall Model}\label{ch:overall-model}
The equations for generator \eqref{G8}--\eqref{G10}, inverter \eqref{I14}--\eqref{I15} and load nodes \eqref{load1}--\eqref{load3} can be summarized in a compact notation as follows: 
\begin{align}
\dot \theta_i &= \omega_i, && i \in \mathcal V, \label{planteq1}\\
\dot L_i &= -A_i \omega_i + p_{g,i}-p_{\ell,i}-p_i, && i \in \mathcal V_\mathcal G \cup \mathcal V_\mathcal I, \label{planteq2} \\
\tau_{d,i}\dot U_i &= U_{f,i}-U_i - \frac{X_{d,i}-X'_{d,i}}{U_i} \cdot q_i, && i \in \mathcal V_\mathcal G,  \label{planteq3}\\
0 &= -A_i \omega_i -p_{\ell,i}-p_i, && i \in \mathcal V_\ell, \label{planteq4}\\
0 &= -q_{\ell,i}-q_i, && i \in \mathcal V_\ell. \label{planteq5}
\end{align}
The interconnection of these node dynamics with the power flow equations \eqref{powerflow1}--\eqref{powerflow2} leads to the overall model of the microgrid, which is presented in port-Hamiltonian form.

First, the ``plant'' state vector $\bm x_p$ of the microgrid is defined as
\begin{align}
\bm x_p = \mathrm{col}\{\bm\vartheta,\bm L_\mathcal G,\bm L_\mathcal I,\bm U_g,\bm\omega_\ell,\bm U_\ell\} \label{plant-states}
\end{align}
with voltage angle deviations $\bm \vartheta = \bm D_p \bm \theta$ and $\bm L_\mathcal G, \bm L_\mathcal I$ being the vectors of angular momentum deviations $L_i = M_i \cdot \omega_i$ of generator and inverter nodes, respectively.
The state vector is used to set up the plant Hamiltonian
\begin{align}
H_p(\bm x_p) &= \frac 12 \sum_{i \in \mathcal V_\mathcal G}\left( M_i^{-1}L_i^2 + \frac{U_i^2}{X_{d,i}-X'_{d,i}}\right) \nonumber \\
& + \frac 12 \sum_{i \in \mathcal V_\mathcal I} M_i^{-1}L_i^2 \nonumber \\
&- \frac 12 \sum_{i \in \mathcal V} B_{ii}U_i^2 - \sum_{(i,j) \in \mathcal E}B_{ij}U_iU_j \cos(\theta_i - \theta_j) \nonumber \\
&+\frac 12 \sum_{i \in \mathcal V_\ell} \omega_{\ell,i}^2, \label{plant-hamiltonian}
\end{align}
which describes the total energy stored in the system. The first row of \eqref{plant-hamiltonian} represents the shifted kinetic energy of the rotors and the magnetic energy of the generator circuits, the second row represents the ``virtual'' kinetic energy at inverter nodes, the third row represents the magnetic energy of transmission lines and the fourth row represents the local deviations of load nodes from nominal frequency. 

Using the Hamiltonian $H_p(\bm x_p)$ and its gradient $\nabla H_p(\bm x_p)$, equations \eqref{powerflow1}--\eqref{planteq5} can be written as follows:
\begin{align}
\begin{bmatrix}
\dot{\bm \vartheta} \\ \dot{\bm L}_\mathcal G \\ \dot{\bm L}_\mathcal I \\ \dot{\bm U}_g \\ \bm 0 \\ \bm 0 \end{bmatrix}
&=\Vast[\underbrace{\begin{bmatrix}
	\bm 0 & \bm D_{p\mathcal G}^\top & \bm D_{p\mathcal I}^\top& \bm 0 &  \bm D_{p\ell}^\top & \bm 0 \\
	-\bm D_{p\mathcal G} & \bm 0 & \bm 0 & \bm 0 & \bm 0 & \bm 0  \\
	-\bm D_{p\mathcal I} & \bm 0 & \bm 0 & \bm 0 & \bm 0 & \bm 0  \\
	\bm 0 & \bm 0 & \bm 0 & \bm 0 & \bm 0 & \bm 0  \\
	-\bm D_{p\ell} & \bm 0 & \bm 0 & \bm 0 & \bm 0 & \bm 0 \\
	\bm 0 & \bm 0 & \bm 0 & \bm 0 & \bm 0 & \bm 0
	\end{bmatrix}}_{\bm J_p} \nonumber \\
&-\underbrace{\begin{bmatrix}
	\bm 0 & \bm 0 & \bm 0 & \bm 0 & \bm 0 & \bm 0 \\
	\bm 0 & \bm{A}_\mathcal G & \bm 0 & \bm 0 & \bm 0 & \bm 0 \\
	\bm 0 & \bm 0 & \bm{A}_\mathcal I & \bm 0 & \bm 0 & \bm 0 \\
	\bm 0 & \bm 0 &\bm 0& \bm{R}_g & \bm 0 & \bm 0 \\
	\bm 0 & \bm 0 & \bm 0 &\bm 0& \bm{A_\ell} & \bm 0 \\
	\bm 0 & \bm 0 & \bm 0 & \bm 0 &\bm 0& \widehat{\bm U}_\ell
	\end{bmatrix}}_{\bm R_p}\Vast]
\bm\nabla H_p \nonumber \\
&-
\underbrace{\begin{bmatrix}
	\bm 0 \\ \bm{\varphi}_\mathcal G \\ \bm{\varphi}_\mathcal I \\ \bm{\varrho}_\mathcal G \\ \bm{\varphi}_\ell \\ \bm{\varrho}_\ell
	\end{bmatrix}}_{\bm r_p}
+ 
	\begin{bmatrix}
	\bm 0 & \bm 0 & \bm 0 & \bm 0 & \bm 0 \\
	\bm {I} & \bm 0 &\bm 0 & \bm 0 & -\bm{\widehat I}_\mathcal G \\
	\bm 0 & \bm I &\bm 0 & \bm 0 & -\bm{\widehat I}_\mathcal I \\
	\bm 0 &\bm 0 & \bm{\hat \tau}_U & \bm 0 & \bm 0 \\
	\bm 0 & \bm 0 &\bm 0 & \bm 0 & -\bm{\widehat I}_\ell \\
	\bm 0 &\bm 0 & \bm 0 & -\bm I & \bm 0
	\end{bmatrix}
\begin{bmatrix}
\bm{p}_\mathcal G \\  \bm p_\mathcal I \\  \bm{U}_f \\ \bm{q}_\ell \\ \bm{p}_\ell  
\end{bmatrix}, \label{plantPHSlossy}
\end{align}
with
\begin{alignat}{5}
\bm A_\mathcal G &= \mathrm{ diag}_i\{A_i\},&&i \in \mathcal V_\mathcal G, \\
\bm A_\mathcal I &= \mathrm{ diag}_i\{A_i\},&&i \in \mathcal V_\mathcal I, \\
\bm A_\ell &= \mathrm{ diag}_i\{A_i\}, && i \in \mathcal V_\ell, \\
\bm R_\mathcal G &= \mathrm{ diag}_i\left\{\frac{X_{di}-X_{di}'}{\tau_{U,i}}\right\}, && i \in \mathcal V_\mathcal G, \\
\widehat{\bm U}_\ell &= \mathrm{ diag}_i\{U_i\}, && i \in \mathcal V_\ell,\\
\bm \varphi_\mathcal G &= \mathrm{ col}_i\big\{G_{ii}U_i^2 + \sum_{j \in \mathcal N_i}G_{ij}U_iU_j\cos(\vartheta_{ij})\big\}, && i \in \mathcal V_\mathcal G, \\
\bm \varphi_\mathcal I &= \mathrm{ col}_i\big\{G_{ii}U_i^2 + \sum_{j \in \mathcal N_i}G_{ij}U_iU_j\cos(\vartheta_{ij})\big\}, && i \in \mathcal V_\mathcal I, \\
\bm \varphi_\ell &= \mathrm{ col}_i\big\{G_{ii}U_i^2 + \sum_{j \in \mathcal N_i}G_{ij}U_iU_j\cos(\vartheta_{ij})\big\}, && i \in \mathcal V_\ell, \\
\bm \varrho_\mathcal G &= \mathrm{ col}_i\big\{R_{g,i}\sum_{j \in \mathcal N_i} G_{ij}U_iU_j\sin(\vartheta_{ij})\big\},&&i \in \mathcal V_\mathcal G, \\
\bm \varrho_\ell &= \mathrm{ col}_i\big\{\sum_{j \in \mathcal N_i} G_{ij}U_iU_j\sin(\vartheta_{ij})\big\},&&i \in \mathcal V_\ell, \\
\bm{\hat \tau}_U &= \mathrm{ diag}_i\{1 / \tau_{U,i}\}, && i \in \mathcal V_\mathcal G, \\
\widehat{\bm I}_\mathcal G &= \begin{bmatrix} \bm I_{n_g \times n_g} &  \bm 0_{n_g \times n_i} & \bm 0_{n_g \times n_\ell} \end{bmatrix},\\
\widehat{\bm I}_\mathcal I &= \begin{bmatrix} \bm 0_{n_i \times n_g} &  \bm I_{n_i \times n_i} & \bm 0_{n_i \times n_\ell} \end{bmatrix},\\
\widehat{\bm I}_\ell &= \begin{bmatrix} \bm 0_{n_\ell \times n_g} & \bm 0_{n_\ell \times n_i} &\bm I_{n_\ell \times n_\ell} \end{bmatrix}. 
\end{alignat}
Note that $\bm J_p = -\bm J_p^\top$ and $\bm R_p \succeq 0$. Hence, this is a port-Hamiltonian descriptor system \cite{Beattie2017} with a nonlinear dissipative relation due to $\bm r_p \neq \bm 0$ \cite{vanderSchaft.2017}.
\section{Price-Based Controller}\label{section:controller}
\subsection{Control Objective}
In the following controller design, the control variables $\bm p_{g}=\mathrm{col}\{\bm p_\mathcal G, \bm p_\mathcal I\}$ are to be regulated in such a way that the steady-state frequency deviation $\overline{\omega}$ from the nominal frequency $\omega^{n}$ is zero at each node while, at the same time, the steady-state generation $\overline{\bm p}_g$ is optimal with respect to an objective function being to be defined. 

In \cite{Trip.2016}, it is shown that a necessary condition for $\bm \omega = \bm 0$ is that the overall resistive losses
\begin{align}
\Phi = \sum_{i \in \mathcal V} G_{ii}U_i^2 + 2 \cdot \sum_{(i,j) \in \mathcal E} G_{ij} U_iU_j\cos(\vartheta_{ij}) \label{resistive}
\end{align}
are equal to the net sum of generation and load, i.e.
\begin{align}
\Phi\stackrel !=\sum_{i \in \mathcal V_\mathcal G}{p_{\mathcal G,i}}+  \sum_{i \in \mathcal V_\mathcal I}{p_{\mathcal I,i}} - \sum_{i \in \mathcal V}{p_{\ell,i}}. \label{eq-balance-scalar}
\end{align}
The above condition serves as a fundamental constraint for any equilibrium that the closed-loop system is supposed to attain. As shown e.g. in \cite{Stegink.2017}, for a given $\bm p_\ell$, the allocation of active power injections $\bm p_g$ is a solution of \eqref{eq-balance-scalar} if and only if there exists a $\bm \nu \in \mathds R^{m_c}$ such that
\begin{align}
\bm D_c \bm \nu = \widehat{\bm I}_\mathcal G^\top\bm p_\mathcal G  +  \widehat{\bm I}_\mathcal I^\top\bm p_\mathcal I - \bm p_{\ell} - \bm\varphi, \label{eq-balance-komm}
\end{align}
with $\bm D_c$ being an arbitrary incidence matrix of a communication graph $\mathscr G_c=(\mathcal V,\mathcal E_c)$ with $m_c= |\mathcal E_c|$ edges and $\bm\varphi = \mathrm{col}\{\bm\varphi_\mathcal G , \bm \varphi_\mathcal I ,  \bm\varphi_\ell\}$.

This alternative formulation by means of \eqref{eq-balance-scalar} will result in a \emph{distributed} controller with control variables $p_{g,i}$ being only dependent on variables of node $i$ or on variables that are adjacent with respect to the communication graph.

The aim is now to design a controller in such a way that the closed-loop equilibrium, i.e. the steady state, is a solution to the optimization problem
\begin{align*}
\begin{array}{ll}
\displaystyle \min_{\bm p_{\mathcal G}, \bm p_{\mathcal I}, \bm \nu}&C(\bm p_{\mathcal G}, \bm p_{\mathcal I}) \\
\mathrm{subject \ to} &\eqref{eq-balance-komm}
\end{array}
\tag{OP}\label{opt-P1}
\end{align*} 
with $C(\bm p_{\mathcal G}, \bm p_\mathcal I)$ being an arbitrary, strictly convex cost function.

Note that \eqref{opt-P1} is a convex optimization problem since \eqref{eq-balance-komm} is linear-affine with respect to optimization variables $(\bm p_{\mathcal G}, \bm p_{\mathcal I}, \bm \nu)$.

\subsection{Distributed Control Algorithm}
The primal-dual gradient method for convex optimization problems \cite{Jokic.2007,JokicLazarvandenBosch2009,Trip.2016} is used to derive a controller that solves \eqref{opt-P1} in steady state.

To simplify the notation we define
\begin{align}
\widehat{\bm I}_g = \begin{bmatrix} \widehat{\bm I}_\mathcal G \\ \widehat{\bm I}_\mathcal I \end{bmatrix}
\end{align}
and by letting \cite[Proposition 1]{Koelsch2019} apply, we get the distributed controller
\begin{align}
\bm \tau_g \dot{\bm p}_{g} &= - \nabla C({\bm p}_{g})+ \widehat{\bm I}_g{\bm \lambda} + \bm u_c, \label{primal-dual-1}\\
\bm \tau_{\lambda}\dot{\bm \lambda}&= \bm D_{c} \bm \nu - \widehat{\bm I}_g^\top\bm p_{g}+  \bm p_{\ell} + \bm \varphi, \label{primal-dual-2} \\
\bm \tau_\nu \dot{\bm \nu} &= -\bm D_{c}^\top \bm \lambda. \label{primal-dual-4}
\end{align}
Diagonal matrices $\bm \tau_g, \bm \tau_\lambda, \bm \tau_\nu > 0$ are used to adjust the convergence behavior of the respective variable: The smaller the $\tau$ value, the faster the convergence and the larger the transient amplitudes. $\bm u_c$ is an additional controller input which is later chosen in such a way that a power-preserving interconnection of plant and controller is achieved \cite{vanderSchaft.2017}, resulting in a closed-loop system which is again port-Hamiltonian.

By defining the controller state 
$\bm x_c = \mathrm{col}\{ \bm \tau_g \bm p_g , \bm \tau_{\lambda}\bm \lambda,  \bm \tau_{\nu}\bm{\nu}\}$
and the controller Hamiltonian
\begin{align}
H_c(\bm x_c)=\frac 12 \bm x_c^\top \bm \tau_c^{-1}\bm x_c
\end{align}
with
\begin{align}
\bm \tau_c=\mathrm{diag}\{\bm \tau_g, \bm \tau_{\lambda},  \bm \tau_{\nu}\},
\end{align}
controller equations \eqref{primal-dual-1}--\eqref{primal-dual-4} have the port-Hamiltonian representation
\begin{align}
\dot{\bm x}_c = \underbrace{\begin{bmatrix} \bm 0 & \widehat{\bm I}_g & \bm 0  \\
	-\widehat{\bm I}_g^\top & \bm 0 & \bm D_{c} \\
	\bm 0 & -\bm D_{c}^\top& \bm 0 \end{bmatrix}}_{\bm J_c} \nabla H_c - \underbrace{\begin{bmatrix} \nabla C \\ -\bm \varphi \\  \bm 0 \end{bmatrix}}_{\bm r_c} 	+ \begin{bmatrix} \bm u_c \\  \bm p_{\ell} \\ \bm 0
\end{bmatrix}. \label{eq-controller-phs}
\end{align}
This representation now provides a straightforward way to set up and analyze the closed-loop system.
\subsection{Closed-Loop System}
With the new composite Hamiltonian $H(\bm x_p, \bm x_c) = H_p(\bm x_p) + H_c(\bm x_c)$ and by choosing $\bm u_c = -\mathrm{col}\{\bm \omega_\mathcal G, \bm \omega_\mathcal I\}$ as control input \cite{Stegink.2017,Stegink.,Koelsch2019}, the interconnection of plant and controller results in the closed-loop descriptor system 
\begin{align}
\bm E\dot{\bm x} = \left(\bm J - \bm R  \right) \nabla H - \bm r + \bm F \bm u \label{closed-loop}
\end{align}
with
\begin{alignat}{2}
\bm E &= \mathrm{diag}\{ \bm I_{3n_\mathcal G +2n_\mathcal I  + n+m+m_c}, \bm 0_{2n_\ell}\},\\
\bm J &= 
\begin{bmatrix}
\bm 0 & \widehat{\bm I}_g & \bm 0& \bm 0 & -\widetilde{\bm I}_\mathcal G^\top &  -\widetilde{\bm I}_\mathcal I^\top & \bm 0 & \bm 0 & \bm 0  \\
-\widehat{\bm I}_g^\top & \bm 0 & \bm D_{c} & \bm 0 & \bm 0& \bm 0 & \bm 0 & \bm 0 & \bm 0  \\
\bm 0 & -\bm D_{c}^{\top}   & \bm 0 & \bm 0 & \bm 0 & \bm 0  & \bm 0 & \bm 0& \bm 0 \\
\bm 0 & \bm 0 & \bm 0 & \bm 0 & \bm D_{p\mathcal G}^\top & \bm D_{p\mathcal I}^\top & \bm 0 & \bm D_{p\ell}^\top & \bm 0 \\
\widetilde{\bm I}_\mathcal G & \bm 0 & \bm 0 &  -\bm D_{p\mathcal G} & \bm 0 & \bm 0 & \bm 0 & \bm 0 & \bm 0  \\
\widetilde{\bm I}_\mathcal I  & \bm 0 & \bm 0  &  -\bm D_{p\mathcal I} & \bm 0 & \bm 0& \bm 0 & \bm 0  & \bm 0  \\
\bm 0 & \bm 0 & \bm 0 &  \bm 0 & \bm 0 & \bm 0 & \bm 0 & \bm 0 & \bm 0  \\
\bm 0 & \bm 0 & \bm 0 &  -\bm D_{p\ell} & \bm 0 & \bm 0 & \bm 0 & \bm 0 & \bm 0 \\
\bm 0 & \bm 0 & \bm 0 & \bm 0 & \bm 0 & \bm 0 & \bm 0 & \bm 0 & \bm 0  
\end{bmatrix}, \\
\bm R &= \mathrm{diag}\{\bm 0_{n_\mathcal G + n_\mathcal I +n+m_c}, \bm R_p \}, \\
\bm r &= \mathrm{col}\{\bm r_c, \bm r_p\}, \\
\bm F &= \begin{bmatrix} \bm 0 & \bm 0 & \bm 0 \\
\bm 0 & \bm 0 & {\bm I} \\
\bm 0 & \bm 0 & \bm 0 \\
\bm 0 & \bm 0 & \bm 0 \\
\bm 0 & \bm 0 & -\bm{\widehat I}_\mathcal G \\
\bm 0 & \bm 0 & -\bm{\widehat I}_\mathcal I \\
\bm{\hat \tau}_U & \bm 0 & \bm 0 \\
\bm 0 & \bm 0 & -\bm{\widehat I}_\ell \\
\bm 0 & -\bm I & \bm 0
\end{bmatrix}, \\
\bm u &= \mathrm{col}\{\bm U_f, \bm q_\ell , \bm p_\ell \} \\
\widetilde{\bm I}_\mathcal G &= \begin{bmatrix} \bm I_{n_\mathcal G \times n_\mathcal G} & \bm 0_{n_\mathcal G \times n_\mathcal I}\end{bmatrix} \\
\widetilde{\bm I}_\mathcal I &= \begin{bmatrix} \bm I_{n_\mathcal I \times n_\mathcal G} & \bm 0_{n_\mathcal I \times n_\mathcal I}\end{bmatrix}.
\end{alignat}
Due to $\bm J = -\bm J^\top$ and $\bm R \succeq 0$ the system is again port-Hamiltonian.

Denote each equilibrium $\overline{\bm x}$ of \eqref{closed-loop} by $\overline{\bm x}$. This equilibrium has two salient properties which are presented in the following two propositions: 
\begin{proposition}
	At each equilibrium of \eqref{closed-loop}, the frequency deviation $\overline \omega_i$, $i \in \mathcal V$, from nominal frequency $\omega^{n}$ is zero.
\end{proposition}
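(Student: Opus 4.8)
The plan is to read off the equilibrium conditions from \eqref{closed-loop} and combine three of its block rows. Since $\bm E\dot{\bm x}=\bm 0$ at any equilibrium $\overline{\bm x}$, the right-hand side of \eqref{closed-loop} vanishes blockwise. The key observation is that, by \eqref{plant-hamiltonian}, the gradient blocks of $H$ associated with $\bm L_\mathcal G$, $\bm L_\mathcal I$ and $\bm\omega_\ell$ are exactly the frequency deviations $\bm\omega_\mathcal G$, $\bm\omega_\mathcal I$ and $\bm\omega_\ell$. Hence the row of \eqref{closed-loop} corresponding to $\bm\vartheta$ reads $\dot{\bm\vartheta}=\bm D_{p\mathcal G}^\top\bm\omega_\mathcal G+\bm D_{p\mathcal I}^\top\bm\omega_\mathcal I+\bm D_{p\ell}^\top\bm\omega_\ell=\bm D_p^\top\bm\omega$, so setting $\dot{\bm\vartheta}=\bm 0$ yields $\bm D_p^\top\overline{\bm\omega}=\bm 0$. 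Assuming the physical graph $\mathscr G_p$ is connected, $\ker\bm D_p^\top=\mathrm{span}\{\bm 1\}$ with $\bm 1$ the all-ones vector, so all nodal frequency deviations share a common value $\overline{\bm\omega}=\overline{\omega}^{*}\bm 1$. It remains to show $\overline{\omega}^{*}=0$.

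Next I would aggregate the nodal power-balance rows. The three block rows associated with $\bm L_\mathcal G$, $\bm L_\mathcal I$ and $\bm\omega_\ell$ reproduce the swing and algebraic equations \eqref{planteq2} and \eqref{planteq4}; left-multiplying each by $\bm 1^\top$ and adding them, the inductive power-flow contributions collect into $\bm 1^\top\bm D_p\,\nabla_{\bm\vartheta}H_p$, which vanishes because each column of the incidence matrix sums to zero ($\bm 1^\top\bm D_p=\bm 0$). The conductance contributions $\bm 1^\top(\bm\varphi_\mathcal G+\bm\varphi_\mathcal I+\bm\varphi_\ell)$ collapse, via the double counting of each edge, to the total losses $\overline\Phi$ of \eqref{resistive}, while the damping terms collect into $\sum_{i\in\mathcal V}A_i\overline{\omega}_i$. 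This produces the aggregate steady-state identity $\sum_{i\in\mathcal V}A_i\overline{\omega}_i=\sum_{i\in\mathcal V_\mathcal G}\overline p_{\mathcal G,i}+\sum_{i\in\mathcal V_\mathcal I}\overline p_{\mathcal I,i}-\sum_{i\in\mathcal V}p_{\ell,i}-\overline\Phi$.

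Finally I would invoke the controller dynamics to eliminate the right-hand side. The $\bm\lambda$-block of \eqref{closed-loop} reproduces \eqref{primal-dual-2}, whose equilibrium is precisely $\bm D_c\overline{\bm\nu}=\widehat{\bm I}_\mathcal G^\top\overline{\bm p}_\mathcal G+\widehat{\bm I}_\mathcal I^\top\overline{\bm p}_\mathcal I-\bm p_\ell-\overline{\bm\varphi}$, i.e. \eqref{eq-balance-komm}. By the equivalence recalled after \eqref{eq-balance-komm}, the existence of such a $\overline{\bm\nu}$ is equivalent to the scalar balance \eqref{eq-balance-scalar}, namely $\overline\Phi=\sum_{i\in\mathcal V_\mathcal G}\overline p_{\mathcal G,i}+\sum_{i\in\mathcal V_\mathcal I}\overline p_{\mathcal I,i}-\sum_{i\in\mathcal V}p_{\ell,i}$. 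Substituting this into the aggregate identity forces $\sum_{i\in\mathcal V}A_i\overline{\omega}_i=0$; combined with the synchronization $\overline{\bm\omega}=\overline{\omega}^{*}\bm 1$ this gives $\overline{\omega}^{*}\sum_{i\in\mathcal V}A_i=0$, and since at least the generator and inverter damping constants are strictly positive ($\sum_{i\in\mathcal V}A_i>0$), we conclude $\overline{\omega}^{*}=0$ and hence $\overline{\bm\omega}=\bm 0$.

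The routine parts are the telescoping of the incidence matrix and the edge double-counting that turns $\bm 1^\top\bm\varphi$ into $\Phi$. The step I expect to be the crux is recognizing that the loss term $\overline\Phi$ appearing in the aggregate swing balance is exactly the quantity that the price variable $\bm\lambda$ is driven to cancel at equilibrium: the link is the loss-compensating balance \eqref{eq-balance-komm} encoded in the $\bm\lambda$-dynamics, without which the net injection would generically fail to match the losses and the common frequency would be nonzero, as noted for lossless-design controllers in the introduction. Connectivity of $\mathscr G_p$ and strict positivity of the aggregate damping are the two standing assumptions that should be made explicit.
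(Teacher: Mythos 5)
Your proposal is correct and follows essentially the same route as the paper's proof: synchronization of $\overline{\bm\omega}$ from $\bm D_p^\top\overline{\bm\omega}=\bm 0$ and connectivity, the steady-state balance $\overline\Phi=\sum\overline p_{g,i}-\sum p_{\ell,i}$ extracted from the $\bm\lambda$-dynamics (the paper gets it by left-multiplying \eqref{primal-dual-2} with $\mathds 1^\top$, you via the stated equivalence with \eqref{eq-balance-komm} --- the same fact), and the aggregated swing/load equations with $\sum_i p_i=\Phi$ to force $\overline\omega\sum_i A_i=0$. If anything, your summation over all of $\mathcal V$ including the load-node damping is slightly more careful than the paper's displayed sum over $\mathcal V_\mathcal G\cup\mathcal V_\mathcal I$ only.
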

\begin{proof}
	Let $\mathds 1^\top= \begin{bmatrix} 1 & \cdots & 1 \end{bmatrix}$ be the all-ones row vector. 	With $\dot{\overline{\bm \vartheta}}= \bm 0$ at steady state, the first row of \eqref{plantPHSlossy} equals $\bm 0 = \bm D_p^\top \overline{\bm \omega}$. Since $\bm D_p$ is the incidence matrix of a connected graph, this implies that each row of vector $\overline{\bm \omega}$ has the same value, i.e. $\overline{\bm \omega} = \overline \omega \cdot \mathds 1$, and thus each node of the microgrid is synchronized to a common frequency $\overline \omega$.  
		
	Since $\mathds 1^\top \bm \varphi = \Phi$ and $\dot{\overline{\bm \lambda}} = \bm 0$,
	left-multiplying \eqref{primal-dual-2} with $\mathds 1^\top$ yields 
	\begin{align}
	0 = 0 -	\sum_{i \in \mathcal V_\mathcal G \cup \mathcal V_\mathcal I}{\overline p_{g,i}} + \sum_{i \in \mathcal V}{\overline p_{\ell,i}} + \overline \Phi, 
	\end{align}
	i.e. condition \eqref{eq-balance-scalar} is fulfilled at steady state. Moreover, a comparison of \eqref{powerflow1} and \eqref{resistive} shows that $\sum_{i \in \mathcal V}p_i = \Phi$. Now left-multiplying \eqref{planteq2} with $\mathds 1^\top$ equals 
\begin{align}
0 = - \sum_{i \in \mathcal V_{\mathcal G}\cup \mathcal V_\mathcal I} A_i \overline \omega_i= - \overline \omega \cdot \sum_{i \in \mathcal V_{\mathcal G}\cup \mathcal V_\mathcal I} A_i
\end{align}
 and since $A_i >0$, it follows that $\overline \omega$ must be zero.
\end{proof}
\begin{proposition}
	At each equilibrium $\overline{\bm x}$ of \eqref{closed-loop}, the marginal prices are equal, i.e. $\nabla C(\overline p_{g,i}) = \nabla C(\overline p_{g,j})$ for $i,j \in \mathcal V_{\mathcal G} \cup \mathcal V_{\mathcal I}$.
\end{proposition}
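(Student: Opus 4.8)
The plan is to read off the equilibrium conditions directly from the controller dynamics \eqref{primal-dual-1}--\eqref{primal-dual-4} and combine them with Proposition 1. Since $\overline{\bm x}$ is an equilibrium of the closed loop, all time derivatives of the controller states vanish, so setting $\dot{\bm p}_g = \dot{\bm \lambda} = \dot{\bm \nu} = \bm 0$ yields three algebraic relations. I expect only the third and the first to be needed: the third pins down the structure of $\overline{\bm \lambda}$, and the first translates that structure into the desired statement about $\nabla C$.

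First I would exploit the $\bm \nu$-equation \eqref{primal-dual-4}. At equilibrium it reduces to $\bm D_c^\top \overline{\bm \lambda} = \bm 0$, i.e. $\overline{\bm \lambda} \in \ker \bm D_c^\top$. Because $\bm D_c$ is the incidence matrix of the connected communication graph $\mathscr G_c$, the kernel of its transpose is spanned by the all-ones vector $\mathds 1$ (exactly as $\bm D_p$ was used in the proof of Proposition 1 to synchronize the frequencies). Hence $\overline{\bm \lambda} = \overline \lambda \cdot \mathds 1$ for some scalar $\overline \lambda$; that is, the multipliers reach consensus across all nodes.

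Next I would use the $\bm p_g$-equation \eqref{primal-dual-1}. At equilibrium it reads $\nabla C(\overline{\bm p}_g) = \widehat{\bm I}_g \overline{\bm \lambda} + \overline{\bm u}_c$. By the choice $\bm u_c = -\mathrm{col}\{\bm \omega_\mathcal G, \bm \omega_\mathcal I\}$ together with Proposition 1 (which gives $\overline{\bm \omega} = \bm 0$, hence $\overline{\bm \omega}_\mathcal G = \bm 0$ and $\overline{\bm \omega}_\mathcal I = \bm 0$), the input term vanishes, so $\nabla C(\overline{\bm p}_g) = \widehat{\bm I}_g \overline{\bm \lambda}$. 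Substituting the consensus $\overline{\bm \lambda} = \overline \lambda \cdot \mathds 1$ and noting that $\widehat{\bm I}_g$ merely selects the generator and inverter entries, I obtain $\nabla C(\overline{\bm p}_g) = \overline \lambda \cdot \mathds 1$; thus every component $\nabla C(\overline p_{g,i})$ equals the common value $\overline \lambda$ for $i \in \mathcal V_\mathcal G \cup \mathcal V_\mathcal I$, which is the claim.

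The argument is short because the heavy lifting is done by the gradient-flow structure: the controller is precisely the primal--dual method for \eqref{opt-P1}, so its rest points are KKT points and $\bm \lambda$ is the Lagrange multiplier of the balance constraint \eqref{eq-balance-komm}; equal marginal prices is just the associated stationarity condition. The only genuine hypothesis I must make explicit is connectivity of $\mathscr G_c$, which is what lets me conclude $\ker \bm D_c^\top = \mathrm{span}\{\mathds 1\}$; without it $\overline{\bm \lambda}$ would only be constant on each connected component and prices would equalize merely component-wise. I also rely on Proposition 1 to annihilate the $\bm u_c$ feedback term, so the two propositions are naturally established in this order.
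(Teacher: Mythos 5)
Your argument is correct and follows the paper's own proof essentially verbatim: equilibrium of \eqref{primal-dual-4} plus connectivity of $\mathscr G_c$ gives $\overline{\bm \lambda} = \overline\lambda \cdot \mathds 1$, and equilibrium of \eqref{primal-dual-1} with $\overline{\bm\omega}=\bm 0$ from Proposition 1 gives $\nabla C(\overline{\bm p}_g) = \widehat{\bm I}_g\overline{\bm\lambda} = \overline\lambda\cdot\mathds 1$. Your added remarks on $\ker \bm D_c^\top$ and the KKT interpretation are consistent with, but not additional to, the paper's reasoning.
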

\begin{proof}
	With $\dot{\overline{\bm \nu}}= \bm 0$ at steady state, \eqref{primal-dual-4} equals $\bm 0 = - \bm D_c^\top \overline{\bm \lambda}$. Since $\bm D_c$ is the incidence matrix of a connected graph, this implies that each row of $\overline{\bm \lambda}$ has the same value, i.e. $\overline{\bm \lambda} = \overline \lambda \cdot \mathds 1$,
	
	Moreover, with $\overline{\bm \omega} = \bm 0 $ from Proposition 1, \eqref{primal-dual-1} leads to $\overline{\bm \lambda}= \nabla C(\overline{\bm p}_g)$ at steady state and hence all marginal prices are equal.
\end{proof}

Proposition 2 shows that the closed-loop system fulfills the well-known economic dispatch criterion \cite{Doerfler2019} at steady state. Note that in this context, $\bm \lambda$ can be interpreted as a \emph{price signal}.

The stability of the closed-loop equilibrium can be investigated by exploiting the port-Hamiltonian structure  \eqref{closed-loop} with its (shifted) passivity property: With dissipation vector $\mathcal R(\bm x)= \bm R \nabla H (\bm x) + \bm r$, equation \eqref{closed-loop} reads as follows:
\begin{align}
\bm E\dot{\bm x} = \bm J \nabla H(\bm x) - \mathcal R(\bm x) + \bm F \bm u, \label{closed-phs-1}
\end{align}
with each equilibrium $\overline{\bm x}$ fulfilling 
\begin{align}
{\bm 0} = \bm J \nabla H(\overline{\bm x}) - \mathcal R(\overline{\bm x}) + \bm F \overline{\bm u} \label{closed-phs-2}
\end{align}
for a constant input vector $\overline{\bm u}$. Since $H(\bm x)$ is a convex and nonnegative function, the shifted Hamiltonian \cite{vanderSchaft.2017}
\begin{align}
\overline H(\bm x):=H(\bm x)-\left( \bm x - \overline{\bm x} \right)^\top \nabla H(\overline{\bm x}) - H(\overline{\bm x}) \label{shifted-H}
\end{align}
is positive definite with minimum $\overline H(\overline{\bm x})=0$. Thus the shifted closed-loop dynamics, i.e. \eqref{closed-phs-1} minus \eqref{closed-phs-2}, can be expressed in terms of $\overline H(\bm x)$  as follows:
\begin{align}
\bm E\dot{\bm x} = \bm J \nabla \overline H(\bm x) - \left[\mathcal R(\bm x)- \mathcal R(\overline{\bm x})\right] + \bm F \left[\bm u - \overline{\bm u}\right].
\end{align}
As a result, stability of $\overline{\bm x}$ is given if the shifted passivity property \cite{vanderSchaft.2017}
\begin{align}
\left[\nabla H(\bm x) - \nabla H({\overline{\bm x}})\right]^\top \left[\mathcal R(\bm x)- \mathcal R(\overline{\bm x})\right] \geq 0 \label{eq-passivity}
\end{align}
is satisfied. Note that for $\bm r_p = \bm 0$, i.e. lossless microgrids, \eqref{eq-passivity} is always fulfilled due to strict convexity of $C(\bm p_\mathcal G, \bm p_\mathcal I)$.  
\section{Simulation}\label{section:simulation}
\subsection{Case Study}
The price-based steady state optimal controller presented in the previous section is now demonstrated by means of an 18-node exemplary microgrid with base voltage of \SI{10}{\kilo\volt}, $n_\mathcal G=n_\mathcal I=7$ and $n_\ell = 4$, see Fig. \ref{fig:grid}.  Generator nodes are represented by black nodes, inverter nodes are represented by gray nodes, and load nodes are represented by white nodes.
 \begin{figure}[t]
	\centering
	\includegraphics[clip, trim=0cm 4cm 0cm 4cm, width=\columnwidth]{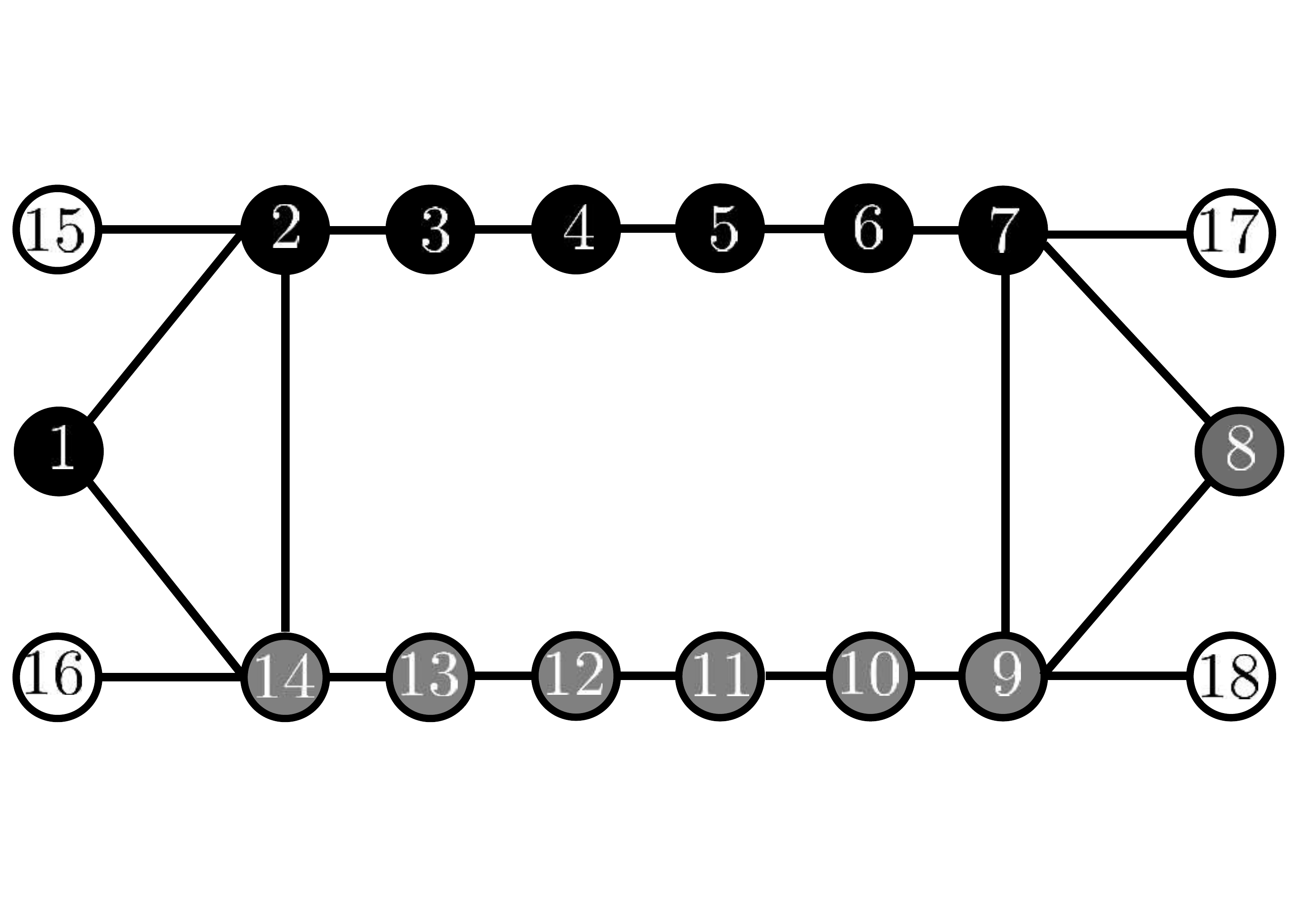}
	\caption{Network topology of exemplary microgrid.}	
	\label{fig:grid}
\end{figure}
All parameters of the microgrid can be found in Tables \ref{tab:generator} to \ref{tab:line} and are given in p.u., except $\tau_{U,i}$, which is given in seconds.
The numericals values for the parameters of generator nodes, load nodes and transmission lines are based on those provided in \cite{Trip.2016,Koelsch2019} and the parameter values of inverter nodes base upon \cite{Monshizadeh.}.
\begin{table}
	\caption{Parameters of Generator Nodes}
	\begin{center}
		\begin{tabular}{|l||l|l|l|l|l|l|l|l|}
			\hline
			$i$ 	& 1 & 2 & 3 & 4 & 5 & 6 & 7  \\
			\hline
			\hline
			$A_i$ & 1.6 & 1.22 & 1.38 & 1.42 & 1.4 & 1.3 & 1.3 \\
			$B_{ii}$ & -2.67 & -6.97 & -4.0 & -2.1 & -3.5 & -5.5 & -7.2\\
			$M_i$ & 5.2 & 3.98 & 4.49 & 4.22 & 4.4 & 4.5 & 5.15  \\ 
			$X_{d,i}$ & 0.02 & 0.03 & 0.03 & 0.025 & 0.02 & 0.024 & 0.03\\
			$X'_{d,i}$ & 0.004 & 0.006 & 0.005 & 0.005 & 0.003 & 0.0044 & 0.0068 \\
			$\tau_{U,i}$ & 6.45 & 7.68 & 7.5 & 6.5 & 6.9 & 7.2 & 6.88 \\
			\hline
		\end{tabular}
		\label{tab:generator}
	\end{center}
\end{table}
\begin{table}
	\caption{Parameters of Inverter Nodes}
	\begin{center}
		\begin{tabular}{|l||l|l|l|l|l|l|l|}
			\hline
			$i$ & 8 & 9 & 10 & 11 & 12 & 13 & 14 \\
			\hline
			\hline
			$A_i$ & 1.5 & 1.7 & 1.55 & 1.6 & 1.4 & 1.65 & 1.25 \\
			$B_{ii}$ & -6.2 & -7.1 & -4.5 & -4.2 & -4.5 & -6.05 & -7.1 \\
			$M_i$ & 4 & 3.85 & 6 & 5.55 & 4.1 & 3.9 & 4.32   \\ 
			\hline
		\end{tabular}
		\label{tab:inverter}
	\end{center}
\end{table}
\begin{table}
	\caption{Parameters of Load Nodes}
	\begin{center}
		\begin{tabular}{|l||l|l|l|l|}
			\hline
			$i$  & 15 & 16 & 17 & 18 \\
			\hline
			\hline
			$A_i$  & 1.45 & 1.35 & 1.5 & 1.7 \\
			$B_{ii}$ & -2.05 & -2.2 & -1.5 & -2.1 \\
			\hline
		\end{tabular}
		\label{tab:load}
	\end{center}
\end{table}

\begin{table}[t]
	\centering
		\caption{Parameters of Transmission Lines}
	\begin{tabular}{|l||l|}
		\hline
		$B_{1,2}$ & 1.27 \\
		$B_{1,14}$ & 1.4 \\
		$B_{2,3}$ & 1.4 \\
		$B_{2,14}$ & 2.25 \\
		$B_{2,15}$ & 2.05 \\
		$B_{3,4}$ & 1.1 \\
		$B_{4,5}$ & 1.0 \\
		\hline
		\end{tabular}
	\hspace{0.2cm}
	\begin{tabular}{|l||l|}
	\hline
	$B_{5,6}$ & 2.5 \\
	$B_{6,7}$ & 3.0 \\
	$B_{7,8}$ & 2.7 \\
	$B_{7,9}$ & 1.5 \\
	$B_{7,17}$ & 3.0 \\
	$B_{8,9}$ & 3.5 \\
	$B_{9,10}$ & 1.5 \\
	\hline
\end{tabular}
	\hspace{0.2cm}
\begin{tabular}{|l||l|}
	\hline
	$B_{9,18}$ & 2.1 \\
	$B_{10,11}$ & 3.0 \\
	$B_{11,12}$ & 1.2 \\
	$B_{12,13}$ & 3.3 \\
	$B_{13,14}$ & 1.25 \\
	$B_{14,15}$ & 2.2 \\
	\hline
	\multicolumn{2}{c}{} 
\end{tabular}
\label{tab:line}
\end{table}
 Without loss of generality, yet for sake of simplicity, we assume constant $R / X$ ratios $\gamma$, i.e. $G_{ij} = -\gamma \cdot B_{ij}$ for each line $(i,j)$, and $\bm \tau_c =\num{0.01} \cdot \bm I$. Moreover, we choose $\bm D_c$ to be identical to the plant incidence matrix $\bm D_p$ after it has been pointed out in \cite{Koelsch2019} that the specific choice of $\bm D_c$ has little influence on the convergence speed to the desired equilibrium.

The simulations were carried out in Wolfram Mathematica 12.0.

\subsection{Cost function and input signals}
The cost function is chosen to 
\begin{align}
C(\bm p_g)=\frac 12 \sum_{i \in \mathcal V_\mathcal G \cup V_\mathcal I} \frac{1}{w_i} \cdot p_{g,i}^2,
\end{align}
with weighting factors $\omega_1 = 1, w_2=\num{1.1}, w_{3}=\num{1.2}$ and so on. Bearing in mind Proposition 2, this specific choice of $C(\bm p_g)$ as a weighted sum of squares leads to active power sharing \cite{Schiffer.1} in steady state, i.e. a proportional share $\overline p_{g,i} \slash w_i  = \overline p_{g,j} \slash w_j= \text{const.}$ for all $i,j\in \mathcal V_\mathcal G \cup \mathcal V_\mathcal I$.

The initial values of input vector $\bm u$ and state vector $\bm x$ are chosen such that the closed-loop system starts in synchronous mode with $\bm \omega ( t=0) = \bm 0$. At regular intervals of \SI{100}{\second}, a step of +\num{0.5} p.u. occurs sequentially at each load node, as shown in Fig. \ref{fig:pl}. $\gamma$ is set to one.
 \begin{figure}[t]
	\centering
	\includegraphics[width=\columnwidth]{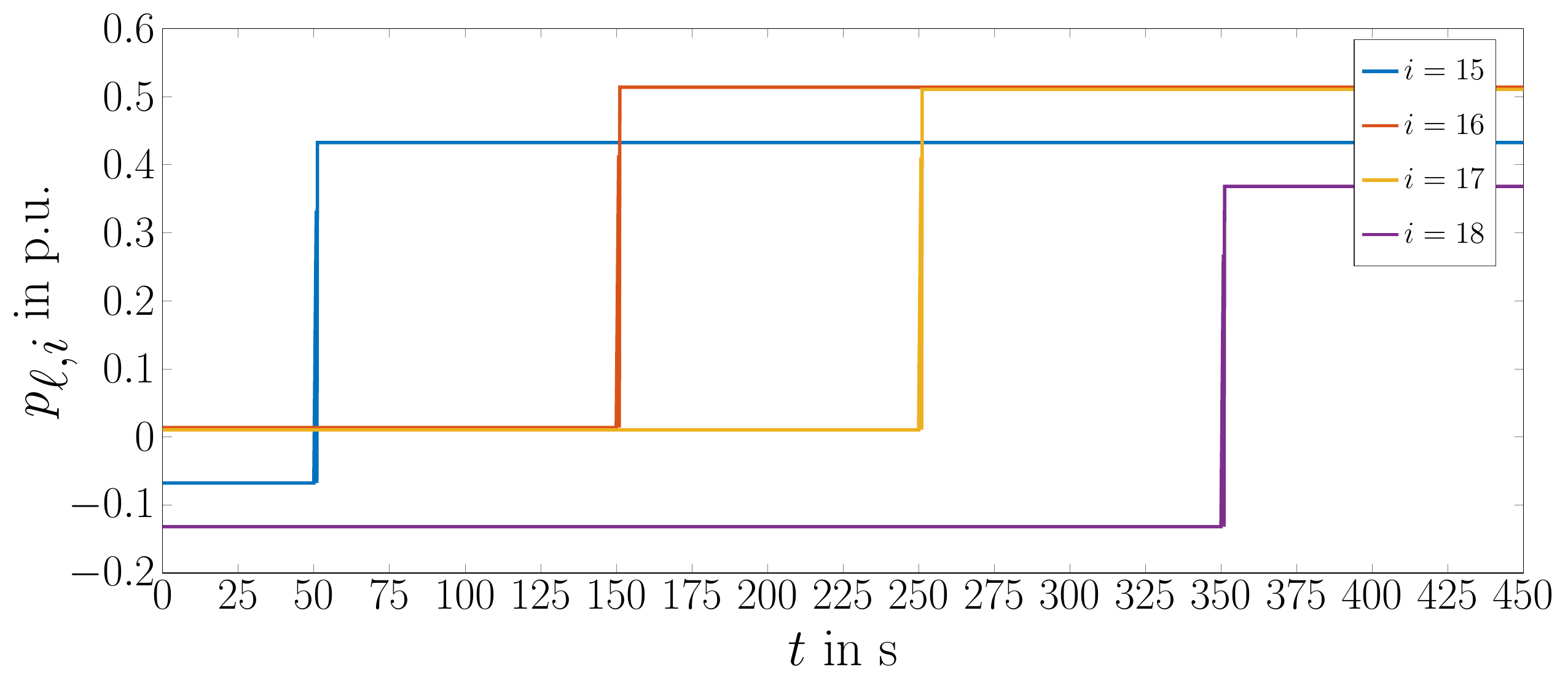}
	\caption{Stepwise increase at load nodes.}	
	\label{fig:pl}
\end{figure}
 \begin{figure*}[t!]
	\centering
	\includegraphics[width=\textwidth]{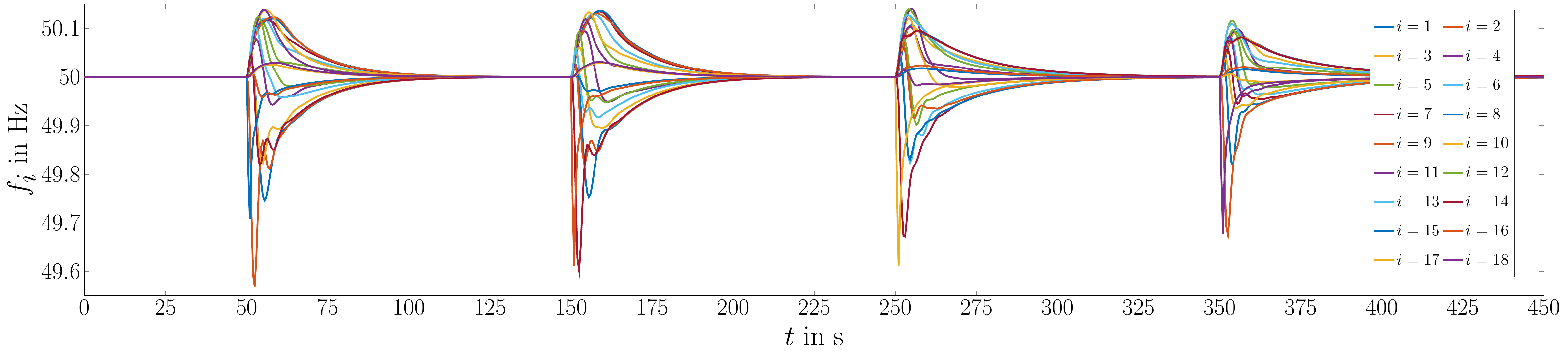}
	\caption{Frequency regulation after step increase at load nodes.}	
	\label{fig:f}
\end{figure*}
\subsection{Results}
Fig. \ref{fig:f} shows the node frequencies for each $i \in \mathcal V$. Starting from synchronous mode with a frequency of \SI{50}{Hz.} at each node, a deviation of the local frequency in the range of about \SI{-0.45}{Hz.} to $+$\SI{0.1}{Hz.}  occurs immediately after the load jumps, before being resynchronized again and being regulated to \SI{50}{Hz.} by the controller. The convergence speed of the individual frequencies to the common frequency of \SI{50}{Hz.} is independent of which node the load jump occurred at.


Fig. \ref{fig:pinj} shows the corresponding active power generation $\bm p_g$ at generator and inverter nodes. After each load step, the controllers automatically increase $\bm p_g$ to compensate for the additional demand. Remarkably, the individual power injections $p_{g,i}$ are equidistant from each other at steady state, regardless of the total generation, thus active power sharing is evident. 

The decay time of both frequency deviation and power regulation is about \SI{40}{\second}. It can be further accelerated by choosing smaller entries within matrix $\bm \tau_c$ of the controller.
 \begin{figure}[t]
	\centering
	\includegraphics[width=\columnwidth]{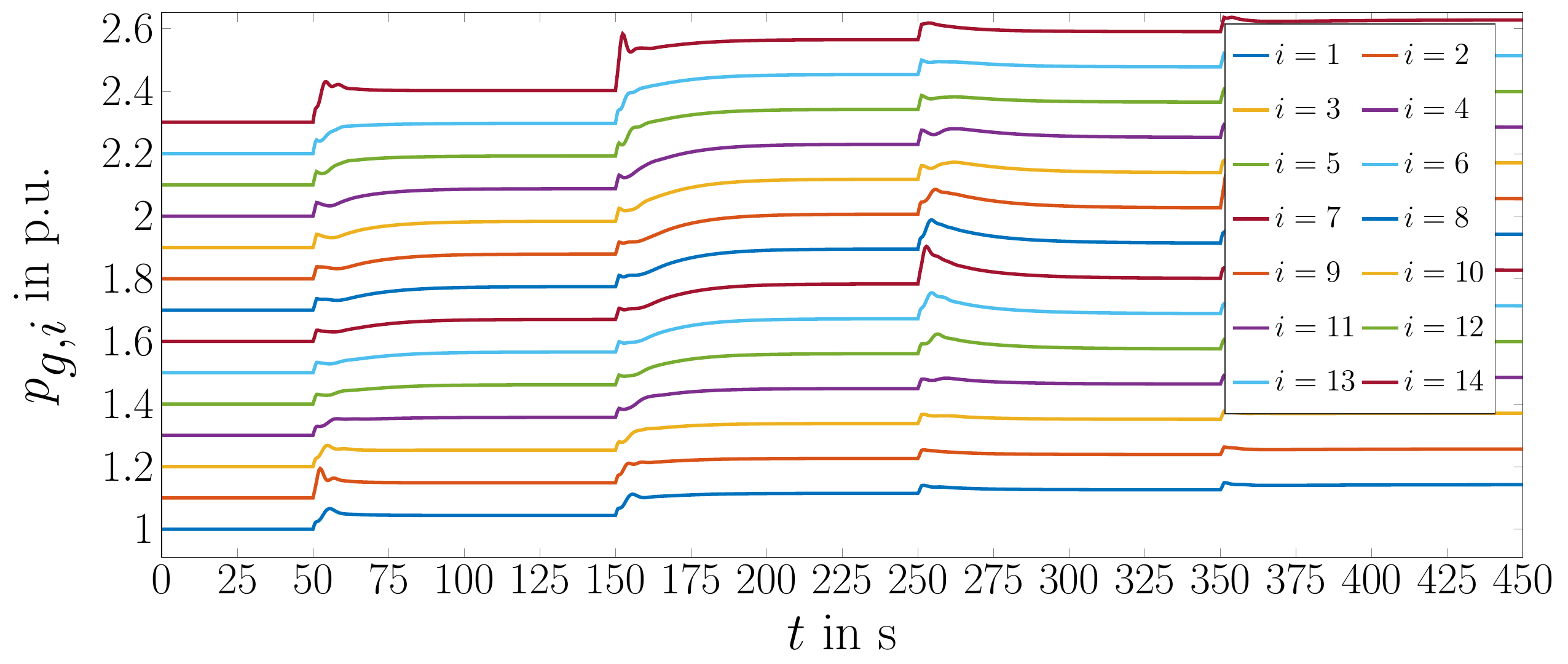}
	\caption{Active power generation at generator and inverter nodes.}	
	\label{fig:pinj}
\end{figure}
\section{Conclusion and Future Work}\label{section:conclusion} 
In this paper we presented a model-based, steady state optimal controller for heterogeneous microgrids. The underlying microgrid model for the controller can consist of a mixture of conventional synchronous generators, power electronics interfaced sources and uncontrollable loads. In contrast to state-of-the-art approaches, the controller ensures asymptotic stability of the equilibrium at nominal frequency of \SI{50}{Hz} even with nonzero line resistances. 
The controller also provides an automatic solution to an optimization problem with a user-definable cost function. As shown in a simulation example, active power sharing can thus be achieved, for instance. However, other optimization problems can also be addressed, e.g. minimal total power input or minimal total line losses.
The closed-loop dynamics can be formulated as a port-Hamiltonian system and thus asymptotic stability of the overall system can be shown using a (shifted) passivity property.

In future research, an integrated voltage regulation will be incorporated into the existing controller scheme.
Furthermore, the presented controller will be applied to a benchmark system with significantly larger amount of nodes and real-world generation and load profiles to further illustrate the feasibility for large-scale systems.
\bibliography{Stegink_GIL_Quellen} 
\bibliographystyle{unsrt}
\end{document}